\documentclass[prl,twocolumn,amsmath,amssymb,superscriptaddress,graphics,graphicx]{revtex4}

\usepackage[colorlinks,linkcolor=red,citecolor=blue,urlcolor=blue]{hyperref}
\usepackage{amsthm,amsbsy}
\usepackage[usenames]{color}
\usepackage{mathptmx}
\usepackage{bm}
\usepackage{flafter}
\usepackage{graphicx}
\usepackage{paralist}


\newcommand{\ket}[1]{\left|#1\right\rangle}
\newcommand{\bra}[1]{\left\langle #1 \right|}




\newcommand{\op}[1]{\bm{#1}}

\newcommand{\Srel}[2]{S\left(#1 \lVert #2\right)} 
\newcommand{\Srelg}[2]{S\bigl(#1 \lVert #2\bigr)} 


\definecolor{DarkRed}{rgb}{0.7,0,0}

 \newtheorem{theorem}{Theorem}

\DeclareMathOperator{\Tr}{Tr}

\newcounter{ComntCntr} 


\def\M{\op{\Omega}}

\renewcommand{\theenumi}{(\roman{enumi})}
\renewcommand{\labelenumi}{\theenumi}

\begin{document}

\title{Quantifying the Nonclassicality of Operations}

\author{Sebastian Meznaric}
\affiliation{Clarendon Laboratory, Department of Physics, University of Oxford, Oxford, OX1 3PU, United Kingdom}

\author{Stephen R. Clark}
\affiliation{Centre for Quantum Technologies, National University of Singapore, 3 Science Drive 2, 117543 Singapore, Singapore}
\affiliation{Clarendon Laboratory, Department of Physics, University of Oxford, Oxford, OX1 3PU, United Kingdom}

\author{Animesh Datta}
\affiliation{Clarendon Laboratory, Department of Physics, University of Oxford, Oxford, OX1 3PU, United Kingdom}

\date{\today}

\begin{abstract}
Deep insight can be gained into the nature of nonclassical correlations by studying the quantum operations that create them. Motivated by this we propose a measure of nonclassicality of a quantum operation utilizing the relative entropy to quantify its commutativity with the completely dephasing operation. We show that our measure of nonclassicality is a sum of two independent contributions, the \emph{generating power} -- its ability to produce nonclassical states out of classical ones, and the \emph{distinguishing power} -- its usefulness to a classical observer for distinguishing between classical and nonclassical states. Each of these effects can be exploited individually in quantum protocols. We further show that our measure leads to an interpretation of quantum discord as the difference in superdense coding capacities between a quantum state and the best classical state when both are produced at a source that makes a classical error during transmission.
\end{abstract}

\maketitle

\emph{Introduction.} Identifying the resources that underlie quantum advantages in quantum communication and information processing is a crucial question of fundamental and technological importance. Generally, quantum entanglement is ascribed this role due to its necessity in a number of tasks exhibiting quantum advantages~\cite{Horodecki09, Plenio07}. However, quantum enhancements are possible in certain computations with limited amounts of entanglement or even none at all when the involved quantum state is mixed~\cite{Datta07,Datta08,Lanyon08,Passante11,Datta11}. Universal quantum computation with pure states also appears to be possible with little entanglement~\cite{VandenNest12}. In addition to computational advantages, quantum communication can also exhibit advantages over classical communication in the absence of entanglement~\cite{DiVincenzo04,Datta09-1,Modi10}.

Recently, it has been suggested that correlations beyond quantum entanglement might provide an explanation behind quantum enhancements.  One of the most common quantities is the quantum discord~\cite{Ollivier01,Henderson01,Piani11, Modi11}. Quantum discord has recently been interpreted as the difference in the performance of the quantum state merging protocol between a state and its locally decohered equivalent~\cite{Madhok11}, and secondly as quantifying the amount of entanglement consumption in the quantum state merging protocol~\cite{Cavalcanti11}. The role of quantum discord in a more general family of protocols has also been studied~\cite{Madhok12}.

An important difference between quantum discord and entanglement is that the latter is non-increasing, on average, under local operations and classical communication. This is the underlying principle of the resource theory of quantum entanglement. On the other hand, local operations can actually increase quantum discord~\cite{Streltsov11,Gessner12,Ciccarello12, Hu12-1}. Discordant states can be created out of classical states by a local channel if and only if the channel changes the local algebraic structure~\cite{Hu12}, and several authors have studied the evolution of quantum discord under various forms of dynamics~\cite{Datta09,Mazzola10,Auccaise11,Rao11,Shi11}. However, the principles underlying the creation of nonclassical correlations from quantum operations are still lacking.

\begin{figure}[b]
\centering
\includegraphics[scale=0.9]{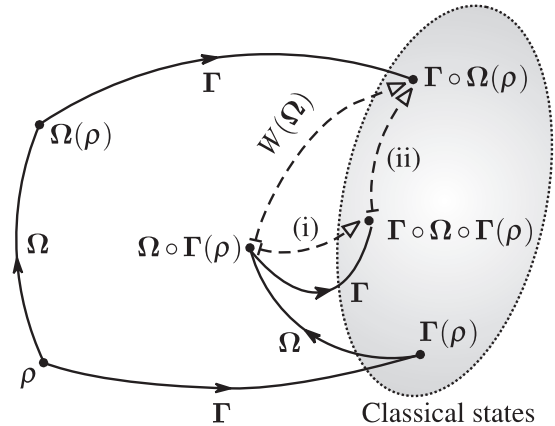}
\caption{Illustration of Thm.~(\ref{th:RelativeEntropySum}). The solid lines represent operations, while the dashed lines represent relative entropies corresponding to the terms from Thm.~\ref{th:RelativeEntropySum}. Since the two curved dashed paths from the state $\M\circ\op{\Gamma}(\rho)$ to $\op{\Gamma}\circ\M(\rho)$ are equidistant in relative entropy the triangle inequality does not apply. Instead quantities (i) and (ii) represent the generating and the distinguishing power, respectively. We consider classical states to be the fixed points of the linear einselection operator $\op{\Gamma}$ and as such is a simplex. This set is smaller than the set of separable~\cite{Horodecki09, Plenio07} and zero-discord states~\cite{Datta11}. Note that our notion of classicality is stricter than that enforced by quantum discord since there is no freedom to choose the classical basis.}
\label{fig:MeasureIllustration}
\end{figure}

Here we investigate the nonclassicality of quantum operations directly. Before presenting our results it is essential to clarify our notion of what is classical. Our criterion is based on einselection, or environment induced superselection, a process via which states of a quantum system become entangled with the environment, effectively \emph{measuring} certain observables of the quantum system~\cite{Zurek03}. We will denote this completely dephasing process as $\op{\Gamma}$. By classicalizing the input and output of a general operation $\M$, a classical operation can be formed as $\op{\Theta} = \op{\Gamma}\circ \M \circ \op{\Gamma}$, where $\circ$ is the composition of operations. Since $\op{\Gamma}^2 = \op{\Gamma}$, this implies the commutation relation $\op{\Theta} \circ \op{\Gamma} = \op{\Gamma} \circ \op{\Theta}$. Taking this relation as the foundation of our notion of classicality will be justified by its implications. We consider classical states $\rho_c$ to be the fixed points of the einselection operator $\op{\Gamma}$ so that $\op{\Gamma}(\rho_c) = \rho_c.$ Thus they are of the form $\rho_c = \sum_{\alpha, \beta} p_{\alpha, \beta} \ket{\alpha}\bra{\alpha} \otimes \ket{\beta}\bra{\beta}$, where $\ket{\alpha}, \ket{\beta}$ are the complete orthonormal eigenbasis of the einselection operator which acted on both parties of the bipartite system. The operation $\op{\Gamma}$ may also act only on one subsystem, in which case we get one-sided classicality with invariant states of the form $\rho = \sum_\alpha p_\alpha \rho_\alpha \otimes \ket{\alpha}\bra{\alpha}$. While all classical states have zero discord, not all zero discord states are classical in the sense used here. A classical observer is one who can measure only in the einselected basis, and for whom the quantum state $\rho$ is completely indistinguishable from the state $\op{\Gamma}(\rho)$. Such states have identical diagonal elements in the einselected basis, but differ in the off-diagonal elements. In contrast a quantum observer may be able to distinguish between the states $\rho$ and $\op{\Gamma}(\rho),$  given enough copies. 

In this Letter, we introduce a measure of quantumness of operations that applies to all completely positive maps. Our approach does not rely on measures of nonclassicality for states and is instead defined from first principles using the fact that a classical map commutes with the einselection operation $\op{\Gamma}$. The extent of non-commutativity is measured using the quantum relative entropy between two different orderings of the dephasing operation $\op{\Gamma}$ and an operation $\M$. We show this measure to be composed of two independent contributions -- firstly, the ability of a nonclassical map to produce non-classical states from classical ones, and secondly, the degree to which it enables classical observers to distinguish states they could not otherwise distinguish classically. We highlight how these two contributions play key roles in quantum communication protocols. Our measure possesses several intuitive properties such as being non-increasing under composition with classical maps and being convex. We calculate our measure for entangling and correlating operations and then apply it to interpret quantum discord via the capacity of superdense coding with noisy states.

We define the \emph{quantumness of an operation} $\M$ as
\begin{equation}
	W_{\Gamma}(\M) = \sup_\rho\, \Srel{\M\circ\op{\Gamma}(\rho)}{\op{\Gamma}\circ\M(\rho)} \label{eq:QDefinition},
\end{equation}
where $\Srel{\rho}{\sigma} = \Tr\left[\rho (\log(\rho) - \log(\sigma))\right]$ is the quantum relative entropy~\cite{Vedral02}, and all logarithms in this paper are base 2. The supremum in Eq.~\eqref{eq:QDefinition} is taken over all quantum states, but we will show shortly it is sufficient to maximize over pure states only. The quantity $W_{\Gamma}$ thus intuitively measures the deviation of the commutator $[\M, \op{\Gamma}]$ from zero by applying both orderings to the same state and comparing the outputs~\footnote{The maximum discrepancy of the outputs over all input states is a measure of distinguishability of operations. This is also the relative entropy of operations and can be used to define conventional properties of channels such as entropy, mutual information, conditional entropy, by applying well understood information theory methods used on quantum states \cite{NielsenChuang}.}. Since the relative entropy is not symmetric in its arguments the specific ordering used in Eq.~\eqref{eq:QDefinition} is essential. The choice of classicalizing the second argument is central to all our subsequent results and is consistent with similar uses of relative entropy in measures of entanglement and discord \cite{Vedral02, Modi10}. The definition of $W_{\Gamma}$ implicitly depends on the fixed einselected basis through $\op{\Gamma}$, however we will suppress this in our subsequent discussions. We now present our main result.\\ \\ \\ \\ \\
\begin{theorem}\label{th:RelativeEntropySum}
The quantumness of an operation $W(\M)$ is the sum of two independent contributions
\begin{align}
	W(\M) & = \sup_\rho \biggl( \Srelg{\op{\Gamma} \circ \M \circ \op{\Gamma} (\rho) }{\op{\Gamma}\circ \M(\rho)} \nonumber \\
	& \quad\quad\quad\quad + \Srelg{\M \circ \op{\Gamma}(\rho)}{\op{\Gamma} \circ \M \circ \op{\Gamma}(\rho)} \biggr), \label{eq:Contributions}
\end{align}
where the supremum is over all quantum states $\rho$.
\end{theorem}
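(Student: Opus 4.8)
The plan is to prove the identity pointwise in $\rho$, before the supremum is taken, since the two suprema then coincide integrand by integrand. Fix a state $\rho$ and abbreviate $A = \M\circ\op{\Gamma}(\rho)$, $B = \op{\Gamma}\circ\M\circ\op{\Gamma}(\rho)$ and $C = \op{\Gamma}\circ\M(\rho)$, so that the left-hand side is $\Srel{A}{C}$ and the right-hand summands are $\Srel{A}{B}$ and $\Srel{B}{C}$. The crucial observation is that $B = \op{\Gamma}(A)$: the middle argument is simply the dephased version of the state that already appears in $W(\M)$. Expanding all three terms with $\Srel{\rho}{\sigma}=\Tr[\rho(\log\rho-\log\sigma)]$, the claimed decomposition $\Srel{A}{C} = \Srel{A}{B} + \Srel{B}{C}$ reduces, after the two $\Tr[A\log A]$ contributions cancel, to the single cross-term identity $\Tr[(A-B)(\log B - \log C)] = 0$.

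To establish this I would use two elementary properties of the einselection map: idempotence, $\op{\Gamma}^2=\op{\Gamma}$, and self-adjointness with respect to the Hilbert--Schmidt inner product, $\Tr[\op{\Gamma}(X)Y]=\Tr[X\op{\Gamma}(Y)]$. Since $B$ and $C$ are both outputs of $\op{\Gamma}$, they are diagonal in the einselected basis; hence $\log B$, $\log C$, and their difference are diagonal too, so $\op{\Gamma}(\log B - \log C) = \log B - \log C$. Self-adjointness then gives $\Tr[A(\log B - \log C)] = \Tr[A\,\op{\Gamma}(\log B - \log C)] = \Tr[\op{\Gamma}(A)(\log B - \log C)] = \Tr[B(\log B - \log C)]$, which is precisely the vanishing of the cross term. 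Equivalently, $A-B = A - \op{\Gamma}(A)$ has no diagonal part in the einselected basis while $\log B - \log C$ is diagonal, so the product is traceless.

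This is in essence a Pythagorean theorem for relative entropy: $B=\op{\Gamma}(A)$ is the information projection of $A$ onto the classical (diagonal) states, and $\Srel{A}{\sigma}=\Srel{A}{\op{\Gamma}(A)}+\Srel{\op{\Gamma}(A)}{\sigma}$ holds for every classical $\sigma$, here specialized to $\sigma=C$. Once the pointwise identity is in hand, taking $\sup_\rho$ of both sides finishes the proof, and the two resulting terms are the \emph{generating power} $\Srel{A}{B}$, measuring how much nonclassicality $\M$ produces from the classical input $\op{\Gamma}(\rho)$, and the \emph{distinguishing power} $\Srel{B}{C}$, measuring what a classical observer can infer from the input coherences, as in Fig.~\ref{fig:MeasureIllustration}.

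The main obstacle I anticipate is not the algebra but the well-definedness of the relative entropies, each of which is finite only when the support of its first argument lies inside that of its second. Here $\mathrm{supp}(A)\subseteq\mathrm{supp}(B)$ is automatic, because $(\op{\Gamma}(A))_{kk}=A_{kk}$ and $A\ge 0$ force the entire $k$th row and column of $A$ to vanish whenever a diagonal entry does. The genuinely delicate case is $\mathrm{supp}(B)\subseteq\mathrm{supp}(C)$: if it fails then $\Srel{B}{C}=+\infty$, and I would have to verify that $\Srel{A}{C}$ diverges in lockstep so that both sides equal $+\infty$ together, rather than breaking the additivity. Controlling this boundary behaviour where $\log C$ is evaluated on a rank-deficient classical state is the one point that requires care beyond the formal manipulation.
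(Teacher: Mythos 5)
Your proposal is correct and takes essentially the same route as the paper: the paper proves the identical pointwise identity by inserting the projectors $\op{\Pi}_\alpha$ and using cyclicity of the trace together with their commutation with the dephased states, which is exactly your self-adjointness step $\Tr[A\log C]=\Tr[\op{\Gamma}(A)\log C]$, before taking the supremum. The one point you leave open closes immediately and in fact makes your write-up slightly more careful than the paper's formal manipulation: since $\log C$ is diagonal, $\Tr[A\log C]=\sum_k A_{kk}\log C_{kk}$, which diverges to $-\infty$ precisely when $\mathrm{supp}(B)\not\subseteq\mathrm{supp}(C)$, so $\Srelu{A}{C}$ and $\Srelu{B}{C}$ are $+\infty$ simultaneously and the decomposition holds in all cases.
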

The first term, which we call the \emph{distinguishing power}, characterizes how well a classical observer can distinguish between $\M \circ \op{\Gamma} (\rho)$ and $\M(\rho)$. The second term, which we call the \emph{generating power}, measures the ability of the map $\M$ to generate a nonclassical state out of a classical input. This is depicted in Fig.~(\ref{fig:MeasureIllustration}). Following Klein's inequality, $W(\M)$ vanishes if and only if the operation $\M$ obeys $[\M, \op{\Gamma}]=0$ and so is classical. An implication of Thm.~(\ref{th:RelativeEntropySum}) is that an operation $\M$ is classical only if it has neither distinguishing nor generating power. The proof of Thm.~(\ref{th:RelativeEntropySum}) follows from the monotonicity of relative entropy under completely positive maps \cite{Ruskai02} and is provided in the Supplementary Material.

Crucially, the distinguishing and generating powers in Eq.~\eqref{eq:Contributions} can be independently zero. Given a quantum operation $\op{\Sigma}$ where both these quantities are non-vanishing, we can construct an operation $\op{\Gamma}\circ\op{\Sigma}$ for which the second term in Eq.~\eqref{eq:Contributions} vanishes but the first term is unchanged. Thus, $W(\op{\Gamma}\circ\op{\Sigma})$ is the maximum distinguishing power of $\op{\Sigma}$. On the other hand, for the operation $\op{\Sigma} \circ \op{\Gamma},$ the first term of Eq.~\eqref{eq:Contributions} vanishes while the second one is unchanged and therefore $W(\op{\Sigma}\circ \op{\Gamma})$ is the maximum generating power of $\op{\Sigma}$. By definition, both these quantities are zero for a classical operation $\op{\Theta}$.

There are instances where both terms play essential and independent roles in a quantum protocol. As an example, consider the BB84 quantum cryptography \cite{NielsenChuang}. In order to engage in the protocol, Alice must be able to prepare states in two non-orthogonal bases, which requires only the power to create non-classical states, implying non-vanishing generating power. Bob, on the other hand, needs to be able to distinguish between classical and non-classical states in order to extract the key and detect the presence of an eavesdropper, thus requiring an operation with non-zero distinguishing power.

\renewcommand{\theenumi}{(P\arabic{enumi})}
\renewcommand{\labelenumi}{\theenumi}

The measure of quantumness $W(\M)$ has some additional properties which are physically intuitive, such as

\begin{inparaenum}
\item{ Extremality: Maximum in the supremum is attained with a pure state. This similarly follows from the joint convexity of relative entropy.  \label{prop:PureStateMax}}

\item{ Monotonicity: Given a general operation $\M$ and a classical operation $\op{\Theta}$, then $W(\op{\Theta} \circ \M) \leq W(\M)$ and $W(\M \circ \op{\Theta}) \leq W(\M)$ holds, showing that the measure is non-increasing under composition. \label{prop:ClassicalComposition}}

\item{ Convexity: The convexity follows from the joint convexity of relative entropy. Thus, given two observers with classical maps $\op{\Theta}_i^A, \op{\Theta}_i^B$ at their disposal, and shared source of randomness, they cannot create a nonclassical operation. In other words, if $W(\op{\Theta}_i^A \otimes \op{1}) = 0$ and $W(\op{1} \otimes \op{\Theta}_i^B)=0$ then $W(\sum_i p_i \op{\Theta}_i^A \otimes \op{\Theta}_i^B)=0$. \label{prop:LOSHRMaps}}
\end{inparaenum}

The proofs of the above properties are given in the Supplementary Material as Thm.~2-4. We next evaluate our measure for common decoherence channels, a local discord-generating and a nonlocal entanglement-generating operation. 

\emph{Examples.} Here we focus on qubits with a classical basis as $\ket{0}, \ket{1}$ and $\op{\Gamma}$ implementing two-sided einselection. For unitary operations $\op{U}$ we have that $W(\op{U}) = 0$ if and only if $\op{U}$ is a combination of a classical permutation matrix of the classical basis states with phase shifts, otherwise $W(\op{U}) = \infty$ owing to the logarithm in the definition of relative entropy. This is proved in the Supplementary Material as Thm.~5. For a Hadamard gate $\op{H}$ an infinite quantumness is attained for input states $\ket{\pm} = (\ket{0}\pm \ket{1})/\sqrt{2}$, for which the generating power vanishes and the distinguishing power is infinite. This therefore tells us that the Hadamard gate can be used to ascertain with certainty that an input state is a classical mixture $\rho = (\ket{0}\bra{0} + \ket{1}\bra{1})/2$, and not $\ket{+}$, in a finite number of measurements on average~\cite{Vedral02}.

For standard qubit error models \cite{NielsenChuang}, we similarly find that if the errors occur in the classical basis then they have vanishing $W$. Since Pauli matrices are permutations of the classical basis up to a phase, such models include any Pauli channels on a single qubit, such as the depolarising, bit-flip, phase-flip and the phase-damping channel. The measure $W$ also vanishes for the amplitude-damping channel, $\op{\Xi}_\gamma(\rho) = \op{F}_1 \rho \op{F}_1^\dagger + \op{F}_2 \rho \op{F}_2^\dagger$, where $\op{F}_1 = \ket{0}\bra{0} + \sqrt{1-\gamma} \ket{1}\bra{1}$ and $\op{F}_2 = \sqrt{\gamma}\ket{0}\bra{1}$, since its Kraus operators correspond to permutation matrices, up to a phase. However, if we rotate away from the classical basis, for example by sandwiching an operation between Hadamard gates $\op{H}$, then quantumness may arise. In the Supplementary Material we show that $W(\op{H} \circ \op{\Xi}_\gamma \circ \op{H})$ is non-zero, despite $\op{\Xi}_\gamma$ itself being classical and its removal (or $\gamma = 0$) leaving $\op{H} \circ \op{1} \circ \op{H} = \op{1}$ which is also classical. This illustrates that in general quantum interference makes $W$ non-additive under composition. 

The non-classicality of two unitary operations $\op{U}_1$ and $\op{U}_2$ with infinite $W$ can nonetheless be compared through the use of regularization. For example, $\lim_{\mu \rightarrow 1} W(\op{\Lambda}_\mu \circ \op{U}_1)/W(\op{\Lambda}_\mu \circ \op{U}_2)$ can be evaluated, where $\op{\Lambda}_\mu(\rho) = \mu \rho + (1-\mu)\op{1}/d$ is the depolarising channel for a $d$ dimensional Hilbert space. As such the depolarising channel acts as a regulator and the correct ratio is obtained in the limit where the regulator becomes the identity. This gives a physically motivated ratio of the quantumness of any two operations whenever $W$ diverges. 

An example of a local channel generating non-classical correlations is given in Refs.~\cite{Streltsov11,Campbell11}. The map is of the form $\M = \op{1} \otimes \M_B$, where $\M_B(\rho) = \op{E}_1 \rho \op{E}_1^\dagger + \op{E}_2 \rho \op{E}_2^\dagger$ and $\op{E}_1 = \ket{0}\bra{0}$, $\op{E}_2 = \ket{+}\bra{1}$, which conditionally and irreversibly drives $\ket{1}$ into a state non-orthogonal to $\ket{0}$~\cite{Streltsov11,Campbell11}. Applying the map $\M$ to the classical state $\sigma_c = \frac{1}{2}(\ket{0}\bra{0} \otimes \ket{0}\bra{0} + \ket{1}\bra{1} \otimes \ket{1}\bra{1})$ leads to $\rho = \frac{1}{2}(\ket{0}\bra{0} \otimes \ket{0}\bra{0} + \ket{1}\bra{1} \otimes \ket{+}\bra{+})$, which has non-zero discord~\cite{Shi11b}, but vanishing entanglement since it is a convex combination of product states. Given that $\M_B(\rho)$ possesses diagonal elements which are independent of the off-diagonal elements of $\rho$ it has zero distinguishing power for any input state. The quantumness of $\M$ thus arises from generating power only and is found (see Supplementary Material) to be $W(\M) = 1$, confirming the map is indeed nonclassical.

Next we look at an entangling operation, specifically a CNOT controlled in the $\ket{\pm}$ basis which is capable of generating a maximally entangled two-qubit state from a pure classical input state $\ket{0}$ or $\ket{1}$. As shown in Fig.~\ref{fig:DepolarizedCNOT} we find that when this operation is followed by a joint two-qubit depolarising channel $\op{\Lambda}_\mu$, for $\mu < 2/3$ quantumness is maximized by the generating power alone, while for $\mu > 2/3$ it is maximized purely by the distinguishing power. Thus, at this cross-over point the maximum composing $W$ for this noisy CNOT operation switches from being exposed by its ability to generate nonclassicality to its ability to distinguish nonclassicality.

\begin{figure}[bt]
  \includegraphics[scale=1]{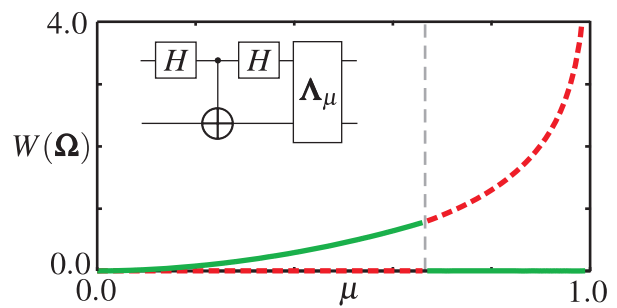}
  \caption{Quantumness of CNOT controlled in the $\ket{\pm}$ basis, followed by the depolarizing channel $\op{\Lambda}_\mu$, as a function of $\mu$. We maximized $W$ and split the expression into the generating power (solid line), and the distinguishing power (dashed line). When $\mu$ is small the action of the depolarizing channel is to substantially degrade distinguishability to such an extent that the generating power dominates. When the $\mu \rightarrow 1$, on the other hand, generating power is fundamentally bounded by $\log(d)$ and thus can no longer compete with the distinguishing power which experiences unbounded growth. The maximum changes from the generating to the distinguishing power at $\mu = 2/3$.}
  \label{fig:DepolarizedCNOT}
\end{figure}

\emph{An interpretation of quantum discord.} Suppose Alice and Bob would like to perform superdense coding, a well-known protocol used to increase the encoding capacity of a single qubit by exploiting entanglement \cite{NielsenChuang}. To do this they order from a source either a quantum state $\rho$, or a cheaper completely dephased version $\op{\Gamma}(\rho)$, to use in the protocol. However, they in fact receive states $\M(\rho)$ or $\M\circ\op{\Gamma}(\rho)$, respectively, where $\M$ accounts for fixed imperfections in the transmission. The question we now ask is how much additional information can they transfer using the superdense coding protocol if they ordered the quantum state $\rho$ rather than $\op{\Gamma}(\rho)$. We will show that if $\M$ is classical, so $W(\M) = 0,$ then the capacity difference is precisely equal to the quantum discord of $\M(\rho),$ where $W$ is evaluated with $\op{\Gamma} = \op{1} \otimes \op{\Gamma}_B$ acting on the receiver's (Bob's) side. This one-sided einselection operator is used to match the definition of the standard quantum discord \cite{Ollivier01,Henderson01}. While this result holds in general, for simplicity we assume that $\rho = \ket{\Phi_d}\bra{\Phi_d}$, where $\ket{\Phi_d} = \sum_\alpha \ket{\alpha}\otimes\ket{\alpha}/\sqrt{d}$ is the maximally entangled state. In this case $\op{\Gamma}(\rho)$ is the maximally classically correlated state.

The capacity of superdense coding~\cite{Bruss04} using a state $\rho$ is given by $F(\rho^{A|B}) = \log(d_A) - S(\rho^{A|B})$, where $S(\rho^{A|B})$ denotes the conditional entropy of the state $\rho$~\footnote{We do not define a conditional state $\rho^{A|B}$, this notation is meant only to denote the conditional entropy of $\rho$.}. Zurek's original definition of quantum discord~\cite{Zurek00} is $Q_z(\rho^{A|B}) = \sum_\alpha p_\alpha S(\rho_A^\alpha) - S(\rho^{A|B})$, where $\rho_A^\alpha$ is the marginal state on Alice's side given that outcome $\alpha$ was obtained, corresponding to the rank-1 projector $\op{\Pi}_\alpha$. Using basic properties of the von Neumann entropy~\cite{NielsenChuang}, we have that $Q_z(\rho^{A|B}) = S\bigl(\op{\Gamma}(\rho^{A|B})\bigr) - S(\rho^{A|B})$. Assuming $[\op{\Gamma}, \M] = 0$, then gives
\begin{multline}
	Q_z(\M\ket{\Phi_d}\bra{\Phi_d}^{A|B}) = \\
	F(\M\ket{\Phi_d}\bra{\Phi_d}^{A|B}) - F(\M \circ \op{\Gamma}\ket{\Phi_d}\bra{\Phi_d}^{A|B}). \label{eq:DiscordInterpretation}
\end{multline}
Extending this to the usual definition of quantum discord $Q$ \cite{Ollivier01, Henderson01}, which involves a minimization over $\op{\Pi}_\alpha$, Eq.~\eqref{eq:DiscordInterpretation} transforms into $Q(\M\ket{\Phi_d}\bra{\Phi_d}^{A|B}) = F(\M\ket{\Phi_d}\bra{\Phi_d}^{A|B})- \sup_\Gamma F(\M\circ\op{\Gamma}\ket{\Phi_d}\bra{\Phi_d}^{A|B})$. Thus quantum discord is the difference in the capacity of superdense coding using the maximally entangled state and the best possible classically correlated state. Our results show that quantum advantage can be gained over the initially classical state in the presence of noise even when $\M(\ket{\Phi_d}\bra{\Phi_d})$ is unentangled. This is illustrated in Fig.~(\ref{fig:SDCapacities}) where $\M = \op{\Lambda}_\mu$ is the depolarising channel.

\begin{figure}[t]
\centering
\includegraphics[scale=1]{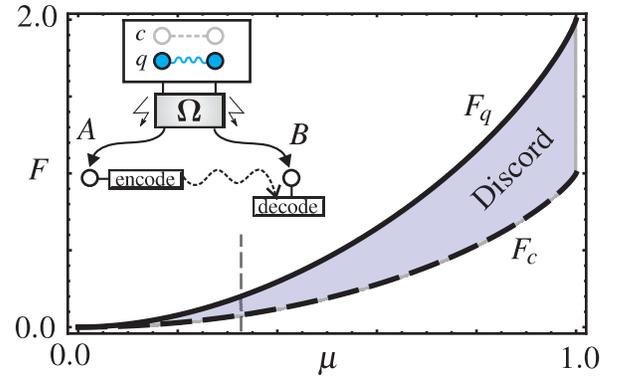}
\caption{Superdense coding capacities $F_q$ using the maximally entangled state and $F_c$ using the classical maximally correlated state when both pass through a depolarising channel $\M = \op{\Lambda}_\mu$. The performances $F_q$ and $F_c$ correspond to the first and the second terms of the Eq. \eqref{eq:DiscordInterpretation}. Quantum discord is then the difference between $F_q$ and $F_c$. The vertical dashed line indicates the value of $\mu$ where all entanglement is lost due to the depolarising channel.}
\label{fig:SDCapacities}
\end{figure}

\emph{Discussion.} In this Letter we have proposed a measure of nonclassicality of quantum operations. The measure is a sum of two independent contributions, the generating power and the distinguishing power, which characterizes an operation as non-classical if and only if the operation can be used by classical observers to distinguish between quantum and classical states or creates nonclassical states out of classical states.

Our measure satisfies several intuitive properties such as convexity and monotonicity under composition of classical maps. In addition, our results show that the einselected relative entropy of discord $Q_g(\rho) = \Srelg{\rho}{\op{\Gamma}(\rho)},$ is non-increasing under the action of classical maps. This is seen by observing that for a classical operation $\op{\Theta},$ we have $Q_g\bigl(\op{\Theta}(\rho)\bigr) = \Srel{\op{\Theta}(\rho)}{\op{\Gamma}\circ \op{\Theta}(\rho)} = \Srel{\op{\Theta}(\rho)}{\op{\Theta} \circ \op{\Gamma} (\rho)} \leq \Srel{\rho}{\op{\Gamma} (\rho)} = Q_g(\rho)$ by monotonicity of relative entropy.  

Furthermore, it is interesting to note that there is a natural complementarity between quantumness of operations and quantumness of states. Specifically, if we denote $\M_\rho$ as any operation capable of generating a state $\rho$ from a classical input then, from Thm.~(\ref{th:RelativeEntropySum}), we see that quantumness must satisfy $W(\M_\rho)\geq Q_g(\rho)$. This provides a readily computable lower bound for $W$. Moreover, through $W$, it suggests a measure of the quantumness of states as $Q_W(\rho) = \inf_{\M_\rho} W(\M_\rho)$, where the minimization is over all operations $\M_\rho$. We conjecture that $Q_W(\rho) = Q_g(\rho)$. If true, this provides a remarkable connection between our measure and quantum discord as well as deepening the link between the nonclassicality of operations and states.

SM would like to thank EPSRC for financial support. SRC thanks the National Research Foundation and the Ministry of Education of Singapore for support. AD was supported in part by the EPSRC (Grant Nos. EP/H03031X/1 and EPSRC/RDF/BtG/0612b/31), and the EU Integrated Project QESSENCE. 

\bibliography{Entanglement}

\clearpage

\setcounter{equation}{0}

\widetext
\begin{center}
{\large \bf Supplementary Material: \\ ``Quantifying the Nonclassicality of Operations''}
\end{center}

This Supplementary Material contains the proofs of the main result Theorem 1 and Properties P1-P3,  a derivation of the value of our nonclassicality measure $W$ for unitary maps, along with additional details of the calculation of $W$ for the examples given in the main text.

\section{Proof of the main result}
Theorem 1 relates to the decomposing our measure $W$ of non-classicality into two contributions. Specifically,
\begin{theorem} \label{th:RelativeEntropySumAppendix}
The quantumness of an operation $W(\M)$ is the sum of two independent contributions
\begin{equation}
W(\M) = \sup_\rho \biggl( \Srelg{\op{\Gamma} \circ \M \circ \op{\Gamma} (\rho) }{\op{\Gamma}\circ \M(\rho)} +  \Srelg{\M \circ \op{\Gamma}(\rho)}{\op{\Gamma} \circ \M \circ \op{\Gamma}(\rho)} \biggr), \label{eq:QDefinition}
\end{equation}
where the supremum is over all quantum states $\rho$.
\end{theorem}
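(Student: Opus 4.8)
The plan is to prove the decomposition not only at the supremum but \emph{pointwise}, for every fixed input $\rho$. Abbreviating $\sigma_1 = \M\circ\op{\Gamma}(\rho)$, $\sigma_2 = \op{\Gamma}\circ\M(\rho)$ and $\sigma_3 = \op{\Gamma}\circ\M\circ\op{\Gamma}(\rho) = \op{\Gamma}(\sigma_1)$, the statement reduces to the single-state ``Pythagorean'' relation $\Srel{\sigma_1}{\sigma_2} = \Srel{\sigma_1}{\sigma_3} + \Srel{\sigma_3}{\sigma_2}$, in which the two summands on the right are exactly the generating and distinguishing powers. Once this identity is established for each $\rho$, I would simply take the supremum of both sides: the left-hand side is by construction the quantity whose supremum defines $W(\M)$, so the supremum step transfers the equality to Eq.~\eqref{eq:QDefinition} and carries no difficulty of its own.

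The heart of the argument is the pointwise identity. Expanding each term via $\Srel{\mu}{\nu} = \Tr[\mu\log\mu] - \Tr[\mu\log\nu]$, the self-entropy pieces $\Tr[\sigma_1\log\sigma_1]$ cancel and the difference of the two sides collapses to the single cross term $\Tr\!\left[(\sigma_3-\sigma_1)(\log\sigma_3 - \log\sigma_2)\right]$; it remains to show this vanishes. Two structural facts do the work. First, both $\sigma_2 = \op{\Gamma}\circ\M(\rho)$ and $\sigma_3 = \op{\Gamma}(\sigma_1)$ are outputs of the dephasing map and hence diagonal in the einselected basis, so $\log\sigma_3 - \log\sigma_2$ is diagonal in that basis. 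Second, because $\op{\Gamma}$ only erases off-diagonal elements, $\sigma_3 = \op{\Gamma}(\sigma_1)$ has precisely the same diagonal entries as $\sigma_1$, so the matrix $\sigma_3 - \sigma_1$ has vanishing diagonal. The trace of a product of a zero-diagonal matrix with a diagonal matrix is $\sum_i (\sigma_3-\sigma_1)_{ii}\,(\log\sigma_3-\log\sigma_2)_{ii} = 0$, which closes the identity.

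Some care is needed over finiteness, since otherwise the equality must be read in the extended reals. The generating term $\Srel{\sigma_1}{\sigma_3}$ is always finite because $\op{\Gamma}$ cannot shrink support: if $\langle i|\sigma_1|i\rangle = 0$, positivity forces the entire $i$-th row and column of $\sigma_1$ to vanish, whence $\mathrm{supp}(\sigma_1)\subseteq\mathrm{supp}(\op{\Gamma}(\sigma_1)) = \mathrm{supp}(\sigma_3)$. The distinguishing term $\Srel{\sigma_3}{\sigma_2}$ and the total $\Srel{\sigma_1}{\sigma_2}$ share the same support condition $\mathrm{supp}(\sigma_3)\subseteq\mathrm{supp}(\sigma_2)$ and are therefore simultaneously finite or $+\infty$, so the identity stays consistent in every case. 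Monotonicity of relative entropy under the completely positive map $\op{\Gamma}$, giving $\Srel{\sigma_1}{\sigma_2}\geq\Srel{\op{\Gamma}(\sigma_1)}{\op{\Gamma}(\sigma_2)} = \Srel{\sigma_3}{\sigma_2}$ (using $\op{\Gamma}(\sigma_2)=\sigma_2$), then guarantees that each contribution is separately non-negative, which is what makes their reading as genuinely independent powers meaningful. I expect the only real subtlety to be the cross-term cancellation itself, namely correctly pairing the diagonal structure of $\log\sigma_3-\log\sigma_2$ against the off-diagonal-only difference $\sigma_3-\sigma_1$; the supremum and the support bookkeeping are routine by comparison.
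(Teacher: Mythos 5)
Your proof is correct and is essentially the paper's own argument: the paper likewise establishes the pointwise identity $\Srel{\sigma_1}{\sigma_2} = \Srel{\sigma_1}{\sigma_3} + \Srel{\sigma_3}{\sigma_2}$ before taking the supremum, using insertion of the projectors $\op{\Pi}_\alpha$ with cyclicity of the trace to show $\Tr[\sigma_1 \log \sigma_2] = \Tr[\sigma_3\log\sigma_2]$ and $\Tr[\sigma_1\log\sigma_3] = \Tr[\sigma_3\log\sigma_3]$, which is exactly your cross-term cancellation $\Tr\bigl[(\sigma_3-\sigma_1)(\log\sigma_3-\log\sigma_2)\bigr]=0$ in a different packaging. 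Your added support and finiteness bookkeeping is a small refinement the paper omits, but the mathematical core is the same.
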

\begin{proof}
We start with the relative entropy, defined as $\Srel{\rho}{\sigma} = \Tr\left[\rho (\log(\rho) - \log(\sigma))\right]$ \footnote{As in the main text all logarithms in this Supplementary Material are to the base 2.}, featuring under maximization in $W(\op{\M})$. Then we insert the sum of a complete set of orthonormal projectors $\sum_\alpha \op{\Pi}_\alpha =\op{1}$, where $\op{\Pi}_\alpha$ are the Kraus operators of $\op{\Gamma}$. We thus obtain
\begin{eqnarray}
\Srelg{\op{\M} \circ \op{\Gamma}(\rho)}{\op{\Gamma} \circ \op{\M}(\rho)} &=& -S\bigl(\op{\M} \circ \op{\Gamma}(\rho)\bigr) - \Tr\bigl[\op{\M} \circ \op{\Gamma}(\rho) \log(\op{\Gamma} \circ \op{\M}(\rho))\bigr] \\
 &=& -S\bigl(\op{\M} \circ \op{\Gamma}(\rho)\bigr) - \Tr\bigl[\sum_\alpha \op{\Pi}_\alpha \op{\M} \circ \op{\Gamma}(\rho) \log(\op{\Gamma} \circ \op{\M}(\rho))\bigr].
\end{eqnarray} 
where $S(\rho) = -\Tr\left[\rho \log(\rho)\right]$ is the von Neumann entropy. Next we use the fact the projective property $\left(\sum_\alpha \op{\Pi}_\alpha\right)^2 = \sum_\alpha \op{\Pi}_\alpha$ together with the cyclic property of the trace and the fact that $\op{\Pi}_\alpha$ commutes with $\op{\Gamma} \circ \op{\M}(\rho)$ and thus also with its logarithm. The above is then transformed to  
\begin{eqnarray}
\Srelg{\op{\M} \circ \op{\Gamma}(\rho)}{\op{\Gamma} \circ \op{\M}(\rho)} &=& -S\bigl(\op{\M} \circ \op{\Gamma}(\rho)\bigr) - \Tr\bigl[\sum_\alpha  \op{\Pi}_\alpha \op{\M} \circ \op{\Gamma}(\rho)\op{\Pi}_\alpha \log(\op{\Gamma} \circ \op{\M}(\rho))\bigr] \\
&=& -S\bigl(\op{\M} \circ \op{\Gamma}(\rho)\bigr) - \Tr\bigl[\op{\Gamma} \circ \op{\M} \circ \op{\Gamma}(\rho)  \log(\op{\Gamma} \circ \op{\M}(\rho))\bigr].
\end{eqnarray}
Next we add and subtract $S\bigl(\op{\Gamma} \circ \op{\M} \circ \op{\Gamma}(\rho)\bigr)$ to the righthand side to get to 
\begin{eqnarray}
\Srelg{\op{\M} \circ \op{\Gamma}(\rho)}{\op{\Gamma} \circ \op{\M}(\rho)} = S\bigl(\op{\Gamma} \circ \op{\M} \circ \op{\Gamma}(\rho)\bigr) -S\bigl(\op{\M} \circ \op{\Gamma}(\rho)\bigr)  + \Srelg{\op{\Gamma} \circ \op{\M} \circ \op{\Gamma}(\rho)}{\op{\Gamma} \circ \op{\M} (\rho)}.
\end{eqnarray}
We now expand the entropy $S\bigl(\op{\Gamma} \circ \op{\M} \circ \op{\Gamma}(\rho)\bigr)$ to give
\begin{equation}
 \Srelg{\op{\M} \circ \op{\Gamma}(\rho)}{\op{\Gamma} \circ \op{\M}(\rho)} = -\Tr[\op{\Gamma} \circ \op{\M} \circ \op{\Gamma}(\rho) \log(\op{\Gamma} \circ \op{\M} \circ \op{\Gamma}(\rho))] -S\bigl(\op{\M} \circ \op{\Gamma}(\rho)\bigr) + \Srelg{\op{\Gamma} \circ \op{\M} \circ \op{\Gamma}(\rho)}{\op{\Gamma} \circ \op{\M} (\rho)}.
\end{equation}
Now we expand $\op{\Gamma}$ and insert back the orthogonal projective operators $\op{\Pi}_\alpha$ yielding
\begin{equation}
\Srelg{\op{\M} \circ \op{\Gamma}(\rho)}{\op{\Gamma} \circ \op{\M}(\rho)} = -\Tr[\sum_\alpha \op{\Pi}_\alpha \op{\M} \circ \op{\Gamma}(\rho) \op{\Pi}_\alpha \log(\op{\Gamma} \circ \op{\M} \circ \op{\Gamma}(\rho))] -S\bigl(\op{\M} \circ \op{\Gamma}(\rho)\bigr) + \Srelg{\op{\Gamma} \circ \op{\M} \circ \op{\Gamma}(\rho)}{\op{\Gamma} \circ \op{\M} (\rho)}.
\end{equation}
Now because $\op{\Pi}_\alpha$ commutes with $\op{\Gamma} \circ \op{\M} \circ \op{\Gamma}$, we find that
\begin{equation}
 \Srelg{\op{\M} \circ \op{\Gamma}(\rho)}{\op{\Gamma} \circ \op{\M}(\rho)} = -S\bigl(\op{\M} \circ \op{\Gamma}(\rho)\bigr) -\Tr[\sum_\alpha \op{\Pi}_\alpha \op{\M} \circ \op{\Gamma}(\rho) \log(\op{\Gamma} \circ \op{\M} \circ \op{\Gamma}(\rho))] + \Srelg{\op{\Gamma} \circ \op{\M} \circ \op{\Gamma}(\rho)}{\op{\Gamma} \circ \op{\M} (\rho)}.
\end{equation}
The first two terms then form another relative entropy, leading us to
\begin{equation}
\Srelg{\op{\M} \circ \op{\Gamma}(\rho)}{\op{\Gamma} \circ \op{\M}(\rho)} = \Srelg{\op{\Gamma} \circ \op{\M} \circ \op{\Gamma}(\rho)}{\op{\Gamma} \circ \op{\M} (\rho)} + \Srelg{\op{\M} \circ \op{\Gamma} (\rho)}{\op{\Gamma} \circ \op{\M} \circ \op{\Gamma}(\rho)}.
\end{equation}
Inserting the supremum over $\rho$ then forms $W(\M)$ and completes the proof.
\end{proof}

\section{Proofs of properties P1-P3}
Here we give the proof of property P1 that the maximum in the maximization for $W$ is always attained for a pure state. 
\begin{theorem} \label{th:PureStateMax}
  Given $\sup_\rho \Srel{\M \circ \op{\Gamma}(\rho)}{\op{\Gamma} \circ \M (\rho)}$, there exists a pure state $\ket{\psi}\bra{\psi}$ such that the supremum in equation \eqref{eq:QDefinition} is attained when $\rho = \ket{\psi}\bra{\psi}$.
\end{theorem}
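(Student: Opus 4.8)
The plan is to show that the functional
\[
f(\rho) = \Srel{\M \circ \op{\Gamma}(\rho)}{\op{\Gamma} \circ \M (\rho)}
\]
is \emph{convex} in $\rho$, and then invoke the standard fact that a convex function on the convex, compact set of density matrices attains its maximum at an extreme point. Since the extreme points of the set of density matrices are exactly the pure states $\ket{\psi}\bra{\psi}$, this immediately yields the claim.

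First I would exploit linearity. Because $\M$ and $\op{\Gamma}$ are completely positive, hence linear, maps, any convex decomposition $\rho = \sum_i p_i \ket{\psi_i}\bra{\psi_i}$ into pure states pushes through both maps: $\M\circ\op{\Gamma}(\rho) = \sum_i p_i\, \M\circ\op{\Gamma}(\ket{\psi_i}\bra{\psi_i})$ and likewise $\op{\Gamma}\circ\M(\rho) = \sum_i p_i\, \op{\Gamma}\circ\M(\ket{\psi_i}\bra{\psi_i})$. Next I would apply the joint convexity of the quantum relative entropy~\cite{Ruskai02}, namely $\Srelg{\sum_i p_i \rho_i}{\sum_i p_i \sigma_i} \le \sum_i p_i \Srel{\rho_i}{\sigma_i}$, to the arguments $\rho_i = \M\circ\op{\Gamma}(\ket{\psi_i}\bra{\psi_i})$ and $\sigma_i = \op{\Gamma}\circ\M(\ket{\psi_i}\bra{\psi_i})$. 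This gives
\[
f(\rho) \le \sum_i p_i\, f(\ket{\psi_i}\bra{\psi_i}) \le \max_i f(\ket{\psi_i}\bra{\psi_i}),
\]
so $\sup_\rho f(\rho) \le \sup_{\ket{\psi}} f(\ket{\psi}\bra{\psi})$. The reverse inequality is trivial since the pure states are a subset of all states, and hence the two suprema coincide.

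To upgrade "supremum" to "attained'', I would note that the pure states form a compact set (the image of the unit sphere under the projection to rays), so the supremum over them is realized at some $\ket{\psi}$; the convexity bound above then certifies that this same $\ket{\psi}$ also saturates the supremum over all $\rho$. The main obstacle will be the attainment step in the degenerate case where the relative entropy diverges: $\Srel{\rho}{\sigma}$ is only lower semicontinuous and jumps to $+\infty$ once the support of the first argument leaves that of the second, so a naive appeal to compactness together with continuity is not by itself sufficient. I would handle this with a case split. In the finite-valued regime the relevant support conditions hold on a compact domain where $f$ is continuous, and the ordinary extremal-value argument applies. In the divergent regime any pure state inside the offending support mismatch already realizes the value $+\infty$, so a maximizing pure state exists trivially. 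Combining the two cases establishes the existence of a pure maximizer in full generality.
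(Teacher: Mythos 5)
Your proof is correct and follows essentially the same route as the paper's: both reduce the supremum over mixed states to pure states by decomposing $\rho$ into pure components, pushing the decomposition through the linear maps $\M$ and $\op{\Gamma}$, and applying the joint convexity of the quantum relative entropy. Your extra care about attainment (compactness, lower semicontinuity, the divergent case) goes beyond the paper, whose proof simply assumes a pure-state maximizer exists --- a worthwhile refinement, though note your case split is not fully airtight, since lower semicontinuity alone does not rule out $f$ being finite at every pure state while the supremum $+\infty$ is only approached in a limit.
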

\begin{proof}
Imagine that we have performed maximization over only the set of pure states and found that the maximum is attained for $\ket{\psi}$. Then for some mixed state $\rho$ we can spectrally decompose it as $\rho = \sum_j \mu_j \ket{\phi_j}\bra{\phi_j}$, where $\ket{\phi_j}$ are it eigenstates. Since the relative entropy is jointly convex in its arguments \cite{NielsenChuang} this implies that
\begin{equation}
\Srel{\M \circ \op{\Gamma}(\rho)}{\op{\Gamma} \circ \M (\rho)} \leq \sum_j \mu_j \Srel{\M \circ \op{\Gamma}(\ket{\phi_j}\bra{\phi_j})}{\op{\Gamma} \circ \M (\ket{\phi_j}\bra{\phi_j})} \nonumber \leq \Srel{\M \circ \op{\Gamma}(\ket{\psi}\bra{\psi})}{\op{\Gamma} \circ \M (\ket{\psi}\bra{\psi})}.
\end{equation}
This completes the proof.
\end{proof}

Next we will consider the property P2, stating that the measure $W$ is non-increasing under the composition with classical maps.
\begin{theorem} \label{th:ClassicalComposition}
If $\op{\M}$ is some map and $W(\op{\M}_c) = 0$ then $W(\op{\M}_c \circ \op{\M}) \leq W(\op{\M})$ and $W(\op{\M} \circ \op{\M}_c) \leq W(\op{\M})$ \label{prop:CompositionIneq}.
\end{theorem}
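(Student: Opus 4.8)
The plan is to reduce both inequalities to two standard facts already available in the excerpt. First, the hypothesis $W(\op{\M}_c)=0$ is equivalent to the commutation relation $\op{\M}_c \circ \op{\Gamma} = \op{\Gamma} \circ \op{\M}_c$; this is exactly the ``only if'' direction of the Klein-inequality characterization of classicality invoked after Thm.~\ref{th:RelativeEntropySum}. Second, the relative entropy is monotone (contractive) under any trace-preserving completely positive map $\Lambda$, i.e.\ $\Srel{\Lambda(\rho)}{\Lambda(\sigma)} \leq \Srel{\rho}{\sigma}$, the same data-processing property used to prove Thm.~\ref{th:RelativeEntropySum}. With these in hand each inequality becomes a single manipulation inside the supremum defining $W$.

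For the left composition I would start from $W(\op{\M}_c \circ \op{\M}) = \sup_\rho \Srel{\op{\M}_c \circ \op{\M} \circ \op{\Gamma}(\rho)}{\op{\Gamma} \circ \op{\M}_c \circ \op{\M}(\rho)}$ and use the commutation relation to rewrite the classicalized second argument as $\op{\Gamma} \circ \op{\M}_c \circ \op{\M}(\rho) = \op{\M}_c \circ \op{\Gamma} \circ \op{\M}(\rho)$. Both arguments then carry $\op{\M}_c$ as their outermost map, so writing them as $\op{\M}_c(\op{\M} \circ \op{\Gamma}(\rho))$ and $\op{\M}_c(\op{\Gamma} \circ \op{\M}(\rho))$ and applying monotonicity under $\op{\M}_c$ bounds the integrand by $\Srel{\op{\M} \circ \op{\Gamma}(\rho)}{\op{\Gamma} \circ \op{\M}(\rho)}$. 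Taking the supremum over $\rho$ on both sides yields $W(\op{\M}_c \circ \op{\M}) \leq W(\op{\M})$.

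For the right composition I would instead apply the commutation relation to the first argument, $\op{\M} \circ \op{\M}_c \circ \op{\Gamma}(\rho) = \op{\M} \circ \op{\Gamma} \circ \op{\M}_c(\rho)$, so that $W(\op{\M} \circ \op{\M}_c) = \sup_\rho \Srel{\op{\M} \circ \op{\Gamma} \circ \op{\M}_c(\rho)}{\op{\Gamma} \circ \op{\M} \circ \op{\M}_c(\rho)}$. The substitution $\sigma = \op{\M}_c(\rho)$ turns the integrand into exactly $\Srel{\op{\M} \circ \op{\Gamma}(\sigma)}{\op{\Gamma} \circ \op{\M}(\sigma)}$. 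Since $\sigma$ ranges only over the image of $\op{\M}_c$ as $\rho$ ranges over all states, the supremum is taken over a subset of all states and is therefore bounded above by the supremum over all states, which is $W(\op{\M})$.

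I do not expect a genuine obstacle, but the role of the commutation relation is the crux and worth flagging. In the left case it is what moves $\op{\M}_c$ to the outermost position so that monotonicity can strip it off; in the right case it is what converts the inner $\op{\M}_c \circ \op{\Gamma}$ into $\op{\Gamma} \circ \op{\M}_c$ so that the substitution $\sigma = \op{\M}_c(\rho)$ closes the argument. Without classicality of $\op{\M}_c$ the two relative-entropy arguments no longer share a common outer or inner factor, and both the monotonicity bound and the restriction-of-domain bound break down; hence the hypothesis $W(\op{\M}_c)=0$ enters essentially rather than cosmetically.
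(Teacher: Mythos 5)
Your proposal is correct and matches the paper's proof essentially step for step: the left inequality is obtained by commuting $\op{\Gamma}$ past $\op{\M}_c$ and stripping the outer $\op{\M}_c$ via monotonicity of relative entropy, and the right inequality by commuting in the first argument and observing that the supremum over states of the form $\op{\M}_c(\rho)$ is a supremum over a subset of all states. Your only addition is making explicit (via the Klein-inequality characterization) that $W(\op{\M}_c)=0$ yields the commutation relation $\op{\M}_c \circ \op{\Gamma} = \op{\Gamma} \circ \op{\M}_c$, a step the paper uses silently.
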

\begin{proof}
Notice that
\begin{equation}
W(\op{\M}_c \circ \op{\M}) = \sup_\rho \Srel{\op{\M}_c \circ \op{\M} \circ \op{\Gamma}(\rho)}{\op{\M}_c \circ \op{\Gamma} \circ \op{\M}(\rho)} \leq \sup_\rho \Srel{\op{\M} \circ \op{\Gamma}(\rho)}{\op{\Gamma} \circ \op{\M}(\rho)} = W(\op{\M}) \label{eq:FirstPropertyProof1},
\end{equation}
where the last inequality is due to the monotonicity of relative entropy under completely positive operations (and thus also the strong subadditivity of the von Neumann entropy, which is equivalent to monotonicity \cite{Ruskai02}). For the reverse order
\begin{eqnarray}
W(\op{\M} \circ \op{\M}_c) &=& \sup_\rho \Srel{\op{\M} \circ \op{\Gamma} \circ \op{\M}_c(\rho)}{\op{\Gamma} \circ \op{\M} \circ \op{\M}_c(\rho)} \nonumber \\
&=& \sup_{\op{\M}_c(\rho)} \Srel{\op{\M}\circ\op{\Gamma}(\rho)}{\op{\Gamma}\circ \op{\M}(\rho)} \nonumber \\
&\leq& \sup_\rho \Srel{\op{\M}\circ\op{\Gamma}(\rho)}{\op{\Gamma}\circ \op{\M}(\rho)} = W(\op{\M}),
\end{eqnarray}
where going from second to the third line we changed the set over which we take supremum from all states to the set of states of the form $\op{\M}(\rho)$. Since this set is entirely contained in the set of all states, the inequality follows.
\end{proof}

Finally, the property P3 is proved in the following theorem.
\begin{theorem} \label{th:LOSHRMaps}
Given a set of local operations $\op{\M}_w^A, \op{\M}_w^B$ such that $W(\op{\M}_w^A \otimes \op{1}) = 0$ and $W(\op{1} \otimes \op{\M}_w^B) = 0$ then $W(\op{\M}) = 0$ for any local operation with shared randomness of the form $\op{\M} = \sum_w \gamma_w \op{\M}_w^A \otimes \op{\M}_w^B$.
\end{theorem}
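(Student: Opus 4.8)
The plan is to reduce the claim to a statement about superoperator commutation and then verify it termwise. Recall from the main text that, by Klein's inequality, $W(\op{\M}) = 0$ if and only if $[\op{\M}, \op{\Gamma}] = 0$, i.e. $\op{\Gamma} \circ \op{\M} = \op{\M} \circ \op{\Gamma}$ as superoperators. Hence the two hypotheses translate directly into
\begin{equation}
\op{\Gamma} \circ (\op{\M}_w^A \otimes \op{1}) = (\op{\M}_w^A \otimes \op{1}) \circ \op{\Gamma}, \qquad \op{\Gamma} \circ (\op{1} \otimes \op{\M}_w^B) = (\op{1} \otimes \op{\M}_w^B) \circ \op{\Gamma},
\end{equation}
for every $w$, and the goal becomes to show $[\op{\M}, \op{\Gamma}] = 0$.

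First I would establish that each product term commutes with $\op{\Gamma}$. The key observation is the factorisation $\op{\M}_w^A \otimes \op{\M}_w^B = (\op{\M}_w^A \otimes \op{1}) \circ (\op{1} \otimes \op{\M}_w^B)$, which holds because the two one-sided maps act on disjoint tensor factors and so commute with one another. Placing $\op{\Gamma}$ on the left and pushing it through using first the $A$-hypothesis and then the $B$-hypothesis gives
\begin{equation}
\op{\Gamma} \circ (\op{\M}_w^A \otimes \op{\M}_w^B) = (\op{\M}_w^A \otimes \op{1}) \circ \op{\Gamma} \circ (\op{1} \otimes \op{\M}_w^B) = (\op{\M}_w^A \otimes \op{\M}_w^B) \circ \op{\Gamma},
\end{equation}
so that $[\op{\M}_w^A \otimes \op{\M}_w^B, \op{\Gamma}] = 0$ for each $w$; equivalently, each product term is itself classical.

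Finally I would invoke linearity of composition. Since $\op{\M} = \sum_w \gamma_w\, \op{\M}_w^A \otimes \op{\M}_w^B$ and pre- or post-composition with the fixed map $\op{\Gamma}$ is linear, the termwise identities sum to $\op{\Gamma} \circ \op{\M} = \sum_w \gamma_w\, \op{\Gamma} \circ (\op{\M}_w^A \otimes \op{\M}_w^B) = \sum_w \gamma_w\, (\op{\M}_w^A \otimes \op{\M}_w^B) \circ \op{\Gamma} = \op{\M} \circ \op{\Gamma}$. Thus $[\op{\M}, \op{\Gamma}] = 0$ and $W(\op{\M}) = 0$ by the same characterisation. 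Alternatively, having shown each $\op{\M}_w^A \otimes \op{\M}_w^B$ is classical, one could finish via the convexity of $W$ used for Property P3, bounding $W(\op{\M}) \le \sum_w \gamma_w\, W(\op{\M}_w^A \otimes \op{\M}_w^B) = 0$; the direct commutation route is cleaner and bypasses convexity entirely. The only point demanding care is that the hypotheses must be read as genuine superoperator identities rather than equalities holding merely on the states that maximise the relevant supremum — and this is precisely what the if-and-only-if characterisation of $W=0$ furnishes, so no additional argument is needed and I expect no serious obstacle.
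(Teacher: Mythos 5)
Your proof is correct, and it follows the same overall strategy as the paper's: translate $W(\cdot)=0$ into the superoperator identity $[\,\cdot\,,\op{\Gamma}]=0$ via Klein's inequality, establish commutation for each product term, and conclude by linearity of composition. Where you differ is in how the termwise commutation is obtained. The paper invokes the product structure of the two-sided einselection map, writing $\op{\Gamma} = \op{\Gamma}^A \otimes \op{\Gamma}^B$ (available because $\op{\Gamma}$ is built from local orthonormal projectors), deduces that $\op{\M}_w^A$ commutes with $\op{\Gamma}^A$ and $\op{\M}_w^B$ with $\op{\Gamma}^B$, and hence that each $\op{\M}_w^A \otimes \op{\M}_w^B$ commutes with $\op{\Gamma}$. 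You instead factor the product map as the composition $(\op{\M}_w^A \otimes \op{1}) \circ (\op{1} \otimes \op{\M}_w^B)$ and push the \emph{full} $\op{\Gamma}$ through, using each hypothesis once. The payoff of your route is a small gain in generality: it nowhere uses that $\op{\Gamma}$ factorizes over the bipartition, so it would apply verbatim to any dephasing map commuting with the one-sided operations, whereas the paper's argument is tied to $\op{\Gamma} = \op{\Gamma}^A \otimes \op{\Gamma}^B$; the paper's version, in exchange, makes the local mechanism explicit. Your closing caveat is also handled correctly: since $W=0$ forces equality of the two orderings on every state, and density matrices span the operator space, the hypotheses really are superoperator identities, which is exactly what licenses pushing $\op{\Gamma}$ through the composition.
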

\begin{proof}
Notice that since we required that $\op{\Gamma}$ be composed of local orthonormal projectors we can write it as $\op{\Gamma} = \op{\Gamma}^A\otimes\op{\Gamma}^B$ in the bipartite case. Given that $\op{\M}^A_w$ and $\op{\M}_w^B$ commute with $\op{\Gamma}^A$ and $\op{\Gamma}^B$, respectively, we also have that $\op{\M}$ commutes with $\op{\Gamma}$, establishing the result.
\end{proof}

\section{Quantumness of unitaries}
Now we turn our attention to the proof of the statement  regarding the quantumness of unitary operations.
\begin{theorem} \label{th:Unitaries}
  When $\op{\Gamma}$ acts on the entire joint Hilbert space, selecting a complete orthonormal classical basis $\ket{k}$, we have for any unitary operation $\op{U}$ that $W(\op{U}) = 0$ if and only if $\op{U} = \sum_k e^{i \phi_k} \ket{k}\bra{k} \op{P}$, where $\op{P}$ is a permutation of the classical basis and $\phi_k$ are phases. Otherwise $W(\op{U}) = \infty$. 
\end{theorem}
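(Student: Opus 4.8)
The plan is to reduce the claim to a statement about when the unitary channel $\op{U}$ commutes with the dephasing map $\op{\Gamma}$, and then treat the two resulting cases separately. Following Klein's inequality, as noted in the main text, $W(\op{U})=0$ if and only if $\op{U}\circ\op{\Gamma}=\op{\Gamma}\circ\op{U}$, so the first task is purely algebraic: characterize the unitaries whose channel commutes with $\op{\Gamma}$. I would extract the condition by feeding the commutation relation the classical basis states $\ket{k}\bra{k}$, which are fixed by $\op{\Gamma}$. This forces $\op{U}\ket{k}\bra{k}\op{U}^\dagger=\op{\Gamma}\bigl(\op{U}\ket{k}\bra{k}\op{U}^\dagger\bigr)$, i.e. the pure state $\op{U}\ket{k}\bra{k}\op{U}^\dagger$ must be diagonal in the classical basis. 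A rank-one diagonal projector is necessarily $\ket{j}\bra{j}$ for a single $j$, forcing $\op{U}\ket{k}=e^{i\theta_k}\ket{\sigma(k)}$; unitarity makes $\sigma$ a permutation, which is exactly the stated form $\op{U}=\sum_k e^{i\phi_k}\ket{k}\bra{k}\op{P}$. The converse is a one-line check that such a $\op{U}$ merely permutes and rephases the classical basis, so both orderings $\op{U}\circ\op{\Gamma}$ and $\op{\Gamma}\circ\op{U}$ produce identical outputs and the relative entropy vanishes for every $\rho$; this settles the $W(\op{U})=0$ half.

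The substantive part is showing $W(\op{U})=\infty$ whenever $\op{U}$ fails to be a phased permutation. The mechanism is that $\Srelg{\rho}{\sigma}$ diverges as soon as the support of $\rho$ escapes the support of $\sigma$, so I would hunt for an input producing a support mismatch rather than try to bound a finite quantity. A natural first guess, a classical input $\ket{k}\bra{k}$, fails: there $\op{U}\circ\op{\Gamma}(\ket{k}\bra{k})=\op{U}\ket{k}\bra{k}\op{U}^\dagger$ is pure with support inside the diagonal state $\op{\Gamma}\circ\op{U}(\ket{k}\bra{k})$, and the relative entropy equals only the finite Shannon entropy of $\op{U}\ket{k}$ in the classical basis. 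The trick is instead to run the construction backwards. If $\op{U}$ is not monomial then, since it has no zero rows, some row $j_0$ has at least two nonzero entries, and I take the pure input $\ket{\psi}=\op{U}^\dagger\ket{j_0}$. Then $\op{U}\circ\op{\Gamma}(\ket{\psi}\bra{\psi})=\op{U}\,\op{\Gamma}(\ket{\psi}\bra{\psi})\,\op{U}^\dagger$ has the same rank as $\op{\Gamma}(\ket{\psi}\bra{\psi})$, whose nonzero diagonal entries are exactly $|\langle j_0|\op{U}|k\rangle|^2$, hence rank at least two, whereas $\op{\Gamma}\circ\op{U}(\ket{\psi}\bra{\psi})=\ket{j_0}\bra{j_0}$ has rank one. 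A rank-two state cannot be supported inside a one-dimensional space, so the relative entropy is $+\infty$ and therefore so is the supremum defining $W(\op{U})$.

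I expect the main obstacle to be identifying the right witness state for the divergence, not the commutation algebra. The naive classical inputs give only finite values, and one is easily tempted into a quantitative estimate; the key realization is that one must engineer a state whose $\op{\Gamma}$-dephasing spreads it across several classical basis states while $\op{U}$ simultaneously funnels it into a single basis direction, and that the clean way to guarantee the latter is to take the $\op{U}^\dagger$-preimage of a basis vector. The remaining care is bookkeeping: confirming that a nonmonomial unitary always has a row with two or more nonzero entries, which follows because orthonormal rows each carrying a single nonzero entry would force a monomial, hence phased-permutation, matrix; and confirming the rank/support claim that makes the relative entropy genuinely infinite rather than merely large. Taken together, the phased-permutation case and its complement exhaust all unitaries, yielding both the stated equivalence and the dichotomy between $W(\op{U})=0$ and $W(\op{U})=\infty$.
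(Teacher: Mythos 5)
Your proof is correct, and its core divergence argument coincides with the paper's: the paper likewise takes the preimage $\ket{\phi_0}=\op{U}^\dagger\ket{j}$ of a classical basis state (non-classical precisely because $\op{U}$ fails to be monomial), observes that $\op{U}\circ\op{\Gamma}\bigl(\ket{\phi_0}\bra{\phi_0}\bigr)$ has support spread over several classical states while $\op{\Gamma}\circ\op{U}\bigl(\ket{\phi_0}\bra{\phi_0}\bigr)=\ket{j}\bra{j}$ is rank one, and concludes that the relative entropy is infinite from the support--kernel mismatch; your ``row with two nonzero entries'' bookkeeping is exactly the condition making that preimage non-classical. Where you genuinely diverge is the complementary direction. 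The paper proves that $W(\op{U})=\infty$ implies $\op{U}$ is not a phased permutation by invoking the decomposition of Thm.~1: since the generating power is the einselected relative entropy of discord of the output state, it is bounded by $\log(d)$, so infinite $W$ forces infinite distinguishing power, from which the paper extracts a classical basis state with non-classical image $\op{U}\ket{k}$. You instead dispense with Thm.~1 entirely: by Klein's inequality, $W(\op{U})=0$ iff the unitary channel commutes with $\op{\Gamma}$, and feeding the commutation relation the states $\ket{k}\bra{k}$ forces each $\op{U}\ket{k}\bra{k}\op{U}^\dagger$ to be a diagonal rank-one projector, hence $\op{U}\ket{k}=e^{i\theta_k}\ket{\sigma(k)}$ with $\sigma$ a permutation by unitarity. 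Combined with your divergence witness this exhausts all unitaries (phased permutation $\Rightarrow W=0$; otherwise $W=\infty$), so your route is logically tighter and more elementary, needing neither the $\log(d)$ bound nor the two-term decomposition; what the paper's route buys in exchange is the finer structural information that the divergence lives entirely in the distinguishing power (the generating power of a unitary is always finite), which is the refinement the main text exploits when discussing the Hadamard gate. Your remaining checks --- that a unitary whose rows each carry a single nonzero entry is a phased permutation, and that a rank-two state cannot be supported in a one-dimensional subspace --- are both sound, so there is no gap.
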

\begin{proof}
We proceed by computing the relative entropy $\Srelg{\op{U} \circ \op{\Gamma}(\rho)}{\op{\Gamma} \circ \op{U}(\rho)}$. First we show that if $\op{U}$ is not of the required form, then $W(\op{U}) = \infty$. Under such assumption, there exists a non-classical state $\ket{\phi_0}$ such that $\op{U} \ket{\phi_0} = \ket{j}$, where $\ket{j}$ is any classical basis state. The relative entropy $\Srel{\rho}{\sigma}$ is infinite due to the term $\Tr\left[\rho\log(\sigma)\right]$ when the kernel of $\sigma$ has a non-zero overlap with the support of $\rho$. So suppose $\ket{\phi_0} = \sum_k \alpha_k \ket{k}$ is the expansion of $\ket{\phi_0}$ in the classical basis. Then $\op{\Gamma}(\ket{\phi_0}\bra{\phi_0}) = \sum_k |\alpha_k|^2 \ket{k}\bra{k} \neq \ket{\phi_0}\bra{\phi_0}$, since $\ket{\phi_0}$ is not classical by assumption. Thus, $\op{U} \op{\Gamma}(\ket{\phi_0}\bra{\phi_0}) \op{U}^\dagger$ will in general have support across numerous classical states besides $\ket{j}\bra{j}$. However, the second argument of the relative entropy is $\op{\Gamma}(\op{U}\ket{\phi_0}\bra{\phi_0}\op{U}^\dagger) = \ket{j}\bra{j}$ and thus is a state with a kernel overlapping the support of $\op{U} \op{\Gamma}(\ket{\phi_0}\bra{\phi_0}) \op{U}^\dagger$, making the relative entropy infinite. 

Secondly, we show that if $W(\op{U}) = \infty$, then some classical state $\ket{k}$ is mapped to a non-classical state. Since the generating power is the einselected relative entropy of discord of the output state, we know that it must be bounded by $\log(d)$, where $d$ is the dimension of the joint Hilbert space. Therefore, if $W(\op{U}) = \infty$, the distinguishing power is infinite. There exists a state $\ket{\psi}$ such that
\begin{align}
  \Srelg{\op{\Gamma}( \op{U} \op{\Gamma}(\ket{\psi}\bra{\psi})\op{U}^\dagger)}{\op{\Gamma}( \op{U} \ket{\psi}\bra{\psi}\op{U}^\dagger)} = \infty.
\end{align}
  Now $\ket{\psi}$ cannot be classical, otherwise the above would vanish. So let $\ket{\psi} = \sum_k \gamma_k \ket{k}$, $\op{\Gamma}(\ket{\psi}\bra{\psi}) = \sum_k |\gamma_k|^2 \ket{k}\bra{k}$, and label the mapping of classical states as $\op{U} \ket{k} = \ket{\psi_k}$. Then we have that $\op{\Gamma}( \op{U}\op{\Gamma}(\ket{\psi}\bra{\psi})\op{U}^\dagger) = \sum_k |\gamma_k|^2 \op{\Gamma}(\ket{\psi_k}\bra{\psi_k})$ and $\op{\Gamma}( \op{U}\ket{\psi}\bra{\psi}\op{U}^\dagger) = \sum_{k,l} \gamma_k \gamma_l^* \op{\Gamma}(\ket{\psi_k}\bra{\psi_l})$. Thus we see that if for all $k$ the states $\ket{\psi_k}$ were classical, then the distinguishing power would vanish. Therefore, we must have that at least one of the states $\ket{\psi_k}$ is not classical, showing that $\op{U}$ is not of the form in the theorem statement. We have thus shown that $\op{U}$ is not a permutation matrix up to a phase if and only if $W(\op{U}) = \infty$. Conversely, when $\op{U}$ is a permutation matrix up to a phase, we know that $W$ vanishes. This completes the proof.
\end{proof}

\section{Further details for example applications}
Here we provide additional details for the examples given in the main text.

\subsection{Quantumness of a composition of operations} 
Given a sequence of operations, the quantumness $W$ is not additive under the composition of operations so that
\begin{align}
  W(\op{A}\circ\op{B}\circ\op{C}) \neq W(\op{A}) + W(\op{B}) + W(\op{C}),
\end{align}
for three operations $\op{A}$, $\op{B}$ and $\op{C}$. This is most easily demonstrated and explained by using a counter-example as described in the main text. Consider $\M = \op{H} \circ \op{\Xi}_\gamma \circ \op{H}$ in which we use the amplitude damping channel $\op{\Xi}_\gamma$ sandwiched between Hadamard gates $\op{H}$. A key observation is that the output state $\M(\rho)$, for all values of $\gamma$, has diagonal elements which are independent of the off-diagonal elements of the input state $\rho$. This immediately implies that in this case $\M$ has no distinguishing power for any input state and its quantumness comprising only of generating power. Using this and property P1 the maximization of $\Srel{\M\circ\op{\Gamma}(\rho)}{\op{\Gamma}\circ\M(\rho)}$ is therefore achieved for classical pure input states $\ket{0}$ and/or $\ket{1}$. It is straightforward to show that both are maxima. Using $\ket{0}$ we compute directly 
\begin{align}
  & \op{H} \circ \op{\Xi}_\gamma \circ \op{H}(\ket{0}\bra{0}) = \frac{1}{2}(1+\sqrt{1-\gamma}) \ket{00}\bra{00} + \frac{\gamma}{2} (\ket{10}\bra{00} + \ket{00}\bra{10}) + \frac{1}{2}(1-\sqrt{1-\gamma}) \ket{10}\bra{10} \label{eq:AmplitudeLine1} \\
  & \op{\Gamma} \circ \op{H} \circ \op{\Xi}_\gamma \circ \op{H}(\ket{0}\bra{0}) = \frac{1}{2}(1+\sqrt{1-\gamma}) \ket{00}\bra{00} + \frac{1}{2}(1-\sqrt{1-\gamma}) \ket{10}\bra{10}. \label{eq:AmplitudeLine2}
\end{align}
Here we see that in going from Eq.~\eqref{eq:AmplitudeLine1} to Eq.~\eqref{eq:AmplitudeLine2} the nonclassical off-diagonal terms are erased by the decoherence operator $\op{\Gamma}$. Overall this yields
\begin{align}
  W(\M) = \log(a)+ (a/2) \log\bigl[(1+a)/(1-a)\bigr] + (b/2) \log\bigl[ (1+b)/(1-b) \bigr],
\end{align}
where $a = \sqrt{1-\gamma}$ and $b = \sqrt{\gamma^2 - \gamma + 1}$. For $\gamma = 1$ the maximum possible quantumness of $W(\M) = 1$ is attained. At the opposite limit $\gamma = 0$, corresponding to the removal of the amplitude damping channel, we have $W(\M) = 0$, since the remaining sequence of operations $\op{H} \circ \op{1} \circ \op{H} = \op{1}$ is classical. For $0<\gamma<1$ the quantumness $W(\M)$ varies monotonically between these values. Given that $W(\op{H}) = \infty$, and hence is maximally quantum, while $W(\op{\Xi}_\gamma) = 0$ is classical, we see that $W(\op{H} \circ \op{\Xi}_\gamma \circ \op{H}) \neq W(\op{H}) + W(\op{\Xi}_\gamma) + W(\op{H})$. This non-additivity of $W$ is an expected consequence of the quantum interference between the different operations in a composition. Moreover, this example illustrates that adding a classical operation between two nonclassical operations can be used to activate quantumness.

\subsection{Discord generating map}
In the main text we consider the map $\M = \op{1} \otimes \M_B$, where $\M_B(\rho) = \op{E}_1 \rho \op{E}_1^\dagger + \op{E}_2 \rho \op{E}_2^\dagger$ and $\op{E}_1 = \ket{0}\bra{0}$, $\op{E}_2 = \ket{+}\bra{1}$. Being a local map $\M$ cannot generate entanglement, yet intriguingly when it is applied to a classical input $\sigma_c = \frac{1}{2}(\ket{0}\bra{0} \otimes \ket{0}\bra{0} + \ket{1}\bra{1} \otimes \ket{1}\bra{1})$ it can generate an output state with non-zero discord~\cite{Streltsov11}. Focusing on the non-trivial single-qubit channel $\M_B$ we again observe that the diagonal elements of $\M_B(\rho)$ are independent of the off-diagonal elements of $\rho$, leading to zero distinguishing power for any input state. As such $W(\M_B)$ is exclusively composed of generating power and its maximization is attained by a pure classical input. This is readily verified to be $\ket{1}$, as might be expected. The value of $W(\M_B)$ is then $\Srel{\M_B\circ\op{\Gamma}(\ket{1}\bra{1})}{\op{\Gamma}\circ\M_B(\ket{1}\bra{1})} = 1$ which is the maximum possible. For the extension of $\M_B$ to a two-qubit map we have $W(\op{1} \otimes \M_B)= W(\M_B)$ which is maximized by input states of the form $\ket{\psi}\otimes\ket{1}$, where $\ket{\psi}$ is an arbitrary single-qubit state.

\subsection{Depolarised rotated CNOT gate}
Next we consider a CNOT with its control rotated into the $\ket{\pm}$ basis, followed by a two-qubit depolarizing channel $\op{\Lambda}_\mu$, which gives a complete operation $\M  = \op{\Lambda}_\mu \circ (\op{H}\otimes\op{1}) \circ \op{CNOT} \circ (\op{H}\otimes\op{1})$. We find that for all values of the depolarizing probability $\mu$ the maximum generating power $W(\M\circ\op{\Gamma})$ is attained by the classical input state $\ket{0}\otimes\ket{0}$, while the maximum distinguishing power $W(\op{\Gamma}\circ\M)$ is attained by the entangled input state $\ket{\Psi} = \left(\ket{0} \otimes \ket{-} + \ket{1} \otimes \ket{+}\right)/\sqrt{2}$. As a function of $\mu$ the generating and distinguishing power are given by the relative entropies 
\begin{align}
  \Srel{\M\circ\op{\Gamma}(\ket{00}\bra{00})}{\op{\Gamma} \circ \M(\ket{00}\bra{00})} = -\frac{3}{4} \log(1-\mu) + \frac{1}{4} \log(1+3\mu), \label{eq:DistinguishingDominates}
\end{align}
and
\begin{align}
  \Srel{\M \circ \op{\Gamma}(\ket{\Psi}\bra{\Psi})}{\op{\Gamma} \circ \M(\ket{\Psi}\bra{\Psi})} = \frac{3}{4}(1-\mu) \log(1-\mu) + \frac{1+3\mu}{4} \log(1+3\mu), \label{eq:GeneratingDominates}
\end{align}
respectively. Performing the maximization for $W(\M)$ reveals that there exists a certain threshold $\mu_c$, such that whenever $\mu<\mu_c$ we have $W(\M) = W(\M\circ\op{\Gamma})$, while for $\mu>\mu_c$ we have $W(\M) = W(\op{\Gamma}\circ\M)$. The maximum therefore switches between the generating and distinguishing power at the transition point $\mu = \mu_c$, which is found by direct substitution to be $\mu_c = 2/3$, where both Eqs.~\eqref{eq:GeneratingDominates} and \eqref{eq:DistinguishingDominates} evaluate to $\log(3)/2$. Note that unlike the previous examples, where $W(\M)$ being composed purely of generating power coincides with a vanishing distinguishing power, here both the generating power and distinguishing power are non-zero for all $0<\mu\leq 1$. 

\end{document}